\documentclass[aps,prx,reprint,superscriptaddress,nofootinbib]{revtex4-2}

\usepackage[T1]{fontenc}
\usepackage[utf8]{inputenc}
\usepackage{amsmath,amssymb,amsthm}
\usepackage{bm}
\usepackage[colorlinks=true,citecolor=blue,linkcolor=blue,urlcolor=blue]{hyperref}

\newtheorem{proposition}{Proposition}
\newtheorem{definition}{Definition}

\begin{document}

\title{Global Kochen–Specker Contextuality Without Local Contextuality and Generalized Bell Nonlocality}

\author{Ming Yang}
\noaffiliation

\date{\today}

\begin{abstract}
A set of quantum data can look classical in every local test and still fail to
admit a single classical explanation of the whole composite system.  We
formulate this failure as global contextuality.  Here ``global'' means
global in the physical sense of the whole multipartite system, not the
local/global terminology of sheaf theory.  Each party's local statistics are
noncontextual and each measured multipartite context admits a generalized
local hidden-variable description, but the GLHV block descriptions cannot be
promoted to a single noncontextual hidden-variable model for the whole system.
Three
bipartite constructions exhibit this separation.  A polarization-path
construction gives a direct global obstruction.  A qubit-qutrit
KCBS construction gives an algebraic scenario-level example, with explicit
formulas for the unconditional KCBS operator, the correlation-polytope
constraints, and the postselected violation.  A flagged qutrit Werner-local
state gives a state-level example: the state is entangled and local for all
projective measurements, its local qutrit marginals do not violate KCBS, yet
postselection rules out a single GNCHV model.  We also spell out
the classical composition lemma:
classical conditional hidden variables can be absorbed into a larger hidden
variable, whereas quantum contextual data need not allow such a
factorization. Within the general Bell-type framework considered here---with
arbitrary parties and arbitrary local compatible contexts, but no cross-party
joint measurements---the absence of local contextuality and GLHV-type
generalized Bell nonlocality does not imply the existence of a global
noncontextual hidden-variable model.  Global contextuality is thus a
compositional obstruction to classical explanation.
\end{abstract}

\maketitle

\section{Introduction}

Suppose a multipartite experiment is inspected piece by piece.  Each
subsystem admits a local noncontextual hidden-variable description.  Moreover,
the tested family of multipartite contexts may admit a generalized local
hidden-variable (GLHV) model: one hidden-variable distribution is shared
across the tested contexts, while each party may respond to an entire
compatible local context as one block.  Thus no generalized nonlocality is
detected in the specified scenario.  It is tempting to conclude that the whole
composite system is classical.  This conclusion is false.  The GLHV block
descriptions and the local noncontextual descriptions may still fail to
combine into one context-independent noncontextual hidden-variable model for
the complete system.

This paper studies that failure.  We call it global contextuality.  The word
``global'' is used in a specific sense.  The local/global contrast used here is the physical
contrast between local subsystems or local tested contexts and the whole
multipartite system.  It is related to, but not identical with, the
local/global distinction in sheaf-theoretic contextuality.  In sheaf theory,
``local'' refers to a measurement context and ``global'' to an assignment over
the full measurement cover.  Here the primary question is instead
compositional: do the local or GLHV block descriptions extend to a
single model of the whole system?

This distinction matters because Bell nonlocality and contextuality answer
different classicality questions.  Bell inequalities test whether spacelike
separated outcomes can be generated by local response functions
\cite{Bell1964,CHSH1969,Brunner2014}.  Kochen-Specker contextuality tests
whether outcomes of compatible measurements can be assigned independently of
which other compatible measurements are performed
\cite{KochenSpecker1967,Budroni2022}.  Fine's theorem gives the standard
joint-distribution characterization of the CHSH case \cite{Fine1982}, while
the Abramsky-Brandenburger framework generalizes this language to arbitrary
measurement covers \cite{AbramskyBrandenburger2011}.  Generalized Bell
scenarios, in which compatible measurements may be jointly performed within a
party, refine the notion of locality and produce several inequivalent local
sets \cite{Temistocles2019,Mazzari2023,Porto2024}.  A generalized
nonlocality test can therefore be silent while a global noncontextual model of
the whole system is still impossible.  Related experimental work has observed
Bell nonlocality and state-dependent contextuality in a common platform
\cite{Xue2023}.

We give three constructions with this structure.  In each construction the
data considered here do not exhibit generalized nonlocality: the tested
context family admits the relevant GLHV description with one hidden-variable
distribution shared across that family.  The first construction uses a
$2\times4$ polarization-path system: the full family has a GLHV certificate,
but it has no common noncontextual
assignment.  The second construction uses a
$2\times3$ qubit-qutrit state and Bob's KCBS
five-cycle \cite{KCBS2008}.  It gives closed-form formulas for the
unconditional KCBS value, the postselected violation, and the exact boundary
between unconditional KCBS-noncontextuality and unconditional KCBS violation;
the GLHV certificate is used only at the explicitly solved point
$c_0=1/4$.
The third construction uses a qutrit Werner-local entangled block with a
product flag.  Unlike the pure-state examples, this gives a state-level
projective-local example: local compatible projective contexts can be treated
as ordinary projective measurement settings, so Werner locality supplies the
required GLHV model.

The paper is organized around the local-to-global composition question.  We first
define scenario-relative generalized nonlocality and global
contextuality.  We then
give the elementary classical gluing principle: if each local block really
has a conditional hidden-variable factorization, the auxiliary variables can
be absorbed into a larger hidden variable and a GNCHV model
follows.  This criterion also gives a practical recipe for finding examples:
certify GLHV for the tested context family, keep local marginals NCHV, and
then violate the stronger GNCHV gluing condition.  The three
examples show how this failure appears in practice.  Conditional
contextuality is used as one diagnostic mechanism, but the main object is the
absence or presence of a single noncontextual model for the whole composite
system.

\section{Operational notions}

All notions below are relative to a specified measurement scenario.  This
keeps the statement operational: the question is not whether a state has some
other measurement that reveals nonclassicality, but whether the data in the
given family of contexts have a classical explanation.  The framework is
general on the Bell side: it allows an arbitrary number of parties and
arbitrary compatible joint measurements within each party.  The only
structural restriction is the Bell-type partition itself: measurements jointly
acting on different parties, such as entangled measurements across
laboratories, are not included in the scenario.

Let party \(i\) have a measurement set \(\mathcal M_i\) and a family of
local compatible contexts \(\mathcal C_i\).  A local context
\(C_i\in\mathcal C_i\) is a set of compatible measurements for party \(i\),
and \(m_{ij}\) denotes the outcome of measurement \(j\in C_i\).  A
multipartite context is a tuple \(C=(C_1,\ldots,C_n)\), one local context per
party.  The tested context family is denoted by
\[
  \mathcal C\subseteq \mathcal C_1\times\cdots\times\mathcal C_n .
\]

\begin{definition}[Local noncontextuality (NCHV-local)]
\label{def:nchv-local}
Party \(i\) is locally noncontextual if there exist a hidden variable
\(\lambda_i\), a single distribution \(q_i(\lambda_i)\) shared by all local
contexts of party \(i\), and response functions assigned to measurements
rather than to contexts such that
\begin{equation}
 p_{C_i}\!\left(\{m_{ij}\}_{j\in C_i}\right)
 =
 \sum_{\lambda_i}q_i(\lambda_i)
 \prod_{j\in C_i}p_i(m_{ij}|j,\lambda_i)
 \label{eq:nchv_local}
\end{equation}
for every local context \(C_i\in\mathcal C_i\).  The composite experiment
satisfies NCHV-local if every party satisfies this condition.
\end{definition}

\begin{definition}[Generalized local hidden-variable model (GLHV)]
\label{def:glhv}
The empirical model on the tested family \(\mathcal C\) has a GLHV
decomposition if there exist one hidden variable \(\lambda\), one distribution
\(q(\lambda)\) shared by all tested multipartite contexts, and block response
functions for each party's compatible local contexts such that
\begin{equation}
 p_C\!\left(\{\{m_{ij}\}_{j\in C_i}\}_{i=1}^n\right)
 =
 \sum_{\lambda}q(\lambda)
 \prod_{i=1}^n
 p_i\!\left(\{m_{ij}\}_{j\in C_i}\middle|C_i,\lambda\right)
 \label{eq:context_local}
\end{equation}
for every multipartite context \(C=(C_1,\ldots,C_n)\in\mathcal C\), where
each \(C_i\) is drawn from the corresponding local context family
\(\mathcal C_i\).  The block response
\(p_i(\cdot|C_i,\lambda)\) may be an arbitrary joint distribution for the
compatible measurements inside party \(i\)'s context; it is not required to
factor over the individual measurements.  No generalized nonlocality means
that the whole tested context family admits such a shared GLHV decomposition.
The block response may depend on the local context \(C_i\), but the
hidden-variable distribution \(q(\lambda)\) is common to all contexts in
\(\mathcal C\).  This is weaker than a global noncontextual model because the
block responses need not factor into context-independent responses for the
individual measurements.
\end{definition}

\begin{definition}[Global noncontextual hidden-variable model (GNCHV)]
\label{def:gnchv}
Let \(\mathcal M=\bigcup_i\mathcal M_i\) be the set of all measurements
appearing in the composite scenario.  The whole empirical model has a GNCHV
model if there is one hidden variable \(\xi\), one distribution \(q(\xi)\),
and context-independent response functions \(p(m_{ij}|j,\xi)\) such that
\begin{equation}
 p_C\!\left(\{\{m_{ij}\}_{j\in C_i}\}_{i=1}^n\right)
 =
 \sum_\xi q(\xi)
 \prod_{i=1}^n\prod_{j\in C_i}p(m_{ij}|j,\xi)
 \label{eq:global_nchv}
\end{equation}
for every multipartite context \(C=(C_1,\ldots,C_n)\in\mathcal C\).  The
experiment is globally contextual when NCHV-local and GLHV hold in the
relevant scenario but no representation of the form
\eqref{eq:global_nchv} exists.
\end{definition}

When every local context contains only one measurement, Eq.~\eqref{eq:context_local}
reduces to the usual Bell-local decomposition.  Generalized Bell scenarios
allow local contexts with several compatible measurements and therefore
distinguish several classical sets; this framework was introduced in the
measurement-compatibility analysis of Bell tests
\cite{Temistocles2019} and developed further in
Refs.~\cite{Mazzari2023,Porto2024}.  In the notation of those works, the
GLHV condition used here corresponds to allowing arbitrary block responses
inside each compatible local context, whereas GNCHV imposes the stronger
per-measurement noncontextual factorization across the whole system.

The relation with sheaf theory is precise but secondary.  Once the composite
system's measurement family $\mathcal C$ is fixed, a representation of the
form \eqref{eq:global_nchv} is a global model over that measurement cover.
Thus a GNCHV model induces a sheaf-theoretic global section or
global probability model.  The converse terminology should not be confused:
the ``local'' objects in sheaf theory are contexts, whereas the local objects
in the composition problem may be physical subsystems, individual particles,
or bipartite contexts.

Conditional contextuality is one way to expose a global failure.  If a
remote outcome prepares a conditional marginal that violates a contextuality
inequality, then the unconditional and conditional descriptions cannot be
understood as one context-independent assignment over all relevant events.
In this paper, however, conditional activation is a diagnostic.  The primary
object is the obstruction to a GNCHV model.

Although the definitions are stated for an arbitrary number of parties,
the constructions below use bipartite systems.  This is intentional:
the separation already appears in the minimal multipartite case \(n=2\).
The word ``multipartite'' here does not require three or more spacelike
separated laboratories; it refers to the compositional problem of combining
classical descriptions assigned to different subsystems or different
multipartite contexts.  In particular, a bipartite system in which one party
has several compatible local degrees of freedom already supports the
distinction between local noncontextuality, generalized locality, and a
global noncontextual model for the whole composite scenario.

\section{Three constructions}
\label{sec:constructions}

\subsection{\texorpdfstring{\(2\times4\)}{2x4} polarization-path obstruction}
\label{sec:ppath}

The first example is not postselection but incompatibility at the global
level.  It is the generalized Hardy construction of Ref.~\cite{Jiang2018},
rewritten in the language of the present hierarchy.  Alice carries a
polarization qubit and Bob carries a four-dimensional
polarization-path system.  A convenient form of the state is
\begin{equation}
 \begin{aligned}
 |\psi\rangle
 &=
 h_0|H,+1\rangle_B|H\rangle_A
 +h_1e^{i\varphi}|V,-1\rangle_B|V\rangle_A,\\
 &\hspace{2.0cm} h_0^2+h_1^2=1 .
 \end{aligned}
 \label{eq:ppath_state}
\end{equation}
Here $H,V$ denote polarization and $\pm1$ denote the two path modes.
Bob has two binary polarization observables $X_1,Y_1$ and two binary path
observables $X_1',Y_1'$.  Polarization observables commute with path
observables, so Bob's local compatibility graph is $K_{2,2}$:
\begin{equation}
 (X_1,X_1'),\quad (X_1,Y_1'),\quad
 (Y_1,X_1'),\quad (Y_1,Y_1')
\end{equation}
are jointly measurable contexts.  Alice has two binary observables
$X_2,Y_2$.  A multipartite context consists of one of Alice's observables
together with one compatible Bob pair.

The measurements are chosen adaptively to the Schmidt coefficients and the
phase of Eq.~\eqref{eq:ppath_state}.  For a polarization degree of freedom,
with computational basis $|H\rangle,|V\rangle$, define
\begin{align}
 |X{=}1\rangle &= h_1|H\rangle+h_0e^{i\varphi/3}|V\rangle,
 \nonumber\\
 |X{=}0\rangle &= h_0|H\rangle-h_1e^{i\varphi/3}|V\rangle,
 \label{eq:ppath_X_basis}\\
 |Y{=}1\rangle &= \frac{|H\rangle+i e^{i\varphi/3}|V\rangle}{\sqrt2},
 \nonumber\\
 |Y{=}0\rangle &= \frac{|H\rangle-i e^{i\varphi/3}|V\rangle}{\sqrt2}.
 \label{eq:ppath_Y_basis}
\end{align}
The same two bases define $X_1,Y_1$ on Bob's polarization and $X_2,Y_2$ on
Alice's polarization.  For Bob's path observables $X_1',Y_1'$, one uses the
identical formulas with $|H\rangle,|V\rangle$ replaced by
$|+1\rangle,|-1\rangle$.  Each pair is orthonormal:
\begin{equation}
 \langle X{=}1|X{=}0\rangle=0,\qquad
 \langle Y{=}1|Y{=}0\rangle=0,
\end{equation}
with normalization following from $h_0^2+h_1^2=1$.

Local noncontextuality is direct for Alice, whose local contexts are
single binary measurements.  For Bob, the relevant point is the structure of 
the reduced state and of the measurements.  Tracing out Alice gives
\begin{equation}
  \rho_B
  =
  h_0^2 |H,+1\rangle\langle H,+1|
  +
  h_1^2 |V,-1\rangle\langle V,-1| ,
  \label{eq:ppath-rhoB}
\end{equation}
which is separable with respect to Bob's polarization-path decomposition.
For any Bob context consisting of one polarization observable \(M\) and one
path observable \(N\), the local statistics therefore have the form
\begin{equation}
  p_B(b,c|M,N)
  =
  \sum_{r=0}^{1} q_r\,
  p_{\rm pol}(b|M,r)\,
  p_{\rm path}(c|N,r),
  \label{eq:ppath-bob-nchv}
\end{equation}
with \(q_0=h_0^2\), \(q_1=h_1^2\), and with \(r\) labelling the two product
components \((H,+1)\) and \((V,-1)\).  The response assigned to a
polarization measurement is independent of which path measurement is jointly
performed, and conversely.  Hence Bob's local empirical model is
noncontextual. 

The absence of generalized nonlocality is certified by an unnormalized finite
GLHV certificate.  We work at the Alice-Bob partition and treat Bob's
compatible polarization-path pair as a single block outcome.  Let \(P(h_0)\)
denote the full enlarged-output probability table generated by the state and
measurements above.  The certificate consists of nonnegative unnormalized
response weights \(X\) which reproduce \(P(h_0)\) and whose corresponding
block normalizations \(c_k\) agree across the relevant contexts.  Once these
equalities hold, each block can be divided by its common \(c_k\), while
\(c_k\) is absorbed into the hidden-variable weight; this recovers an
ordinary normalized GLHV model of the form in Definition~\ref{def:glhv}.

In the bookkeeping used for the finite certificate, the full probability
table has \(64\) entries.  Its marginal/coarse-grained table \(P_1(h_0)\)
has \(32\) entries, but these entries are obtained from \(P(h_0)\) by
marginalization and coarse-graining.  They are therefore checked as derived
reported data rather than imposed as independent GLHV constraints.  The GLHV
core certificate, the equal-\(c_k\) constraints, and the quantifier
elimination are given in Appendix~\ref{app:ppath-glhv}.  They reduce the
feasibility condition to \(0\le h_0^2\le1\).  Thus the GLHV decomposition
exists throughout the full physical range \(0\le h_0\le1\).  The Hardy-type
contradiction below therefore does not come from generalized nonlocality of
the unpostselected data, but from the impossibility of choosing the
GLHV block descriptions as marginals of one global noncontextual
assignment.

This obstruction is the $n=3$ instance of the generalized Hardy theorem of
Ref.~\cite{Jiang2018}.  The three effective parties are Alice's polarization,
Bob's polarization, and Bob's path.  The theorem uses two families of
zero-probability events, conventionally denoted
$p(b_\alpha a_{\bar\alpha})=0$ and
$p(\bar b_\beta a_{\bar\beta})=0$, with
$|\alpha|=2$, $|\beta|=1$, and $|\alpha|+|\beta|\le n+1$.
For $n=3$ these conditions have the logical size needed to force the
target event $a_{I_3}$ to vanish in any deterministic classical assignment.
In the present realization, the abstract symbols $a_k,b_k$ are tied to the
concrete compatible measurements
$X_1,Y_1,X_1',Y_1',X_2,Y_2$.  The required zero-probability events are
\begin{align}
 p(x_1{=}1,\,y_1'{=}1,\,y_2{=}1)&=0, \label{eq:H1}\\
 p(y_1{=}1,\,x_1'{=}1,\,y_2{=}1)&=0, \label{eq:H2}\\
 p(y_1{=}1,\,y_1'{=}1,\,x_2{=}1)&=0, \label{eq:H3}\\
 p(x_1{=}1,\,y_1'{=}0,\,y_2{=}0)&=0, \label{eq:H4}\\
 p(y_1{=}0,\,x_1'{=}1,\,y_2{=}0)&=0, \label{eq:H5}\\
 p(y_1{=}0,\,y_1'{=}0,\,x_2{=}1)&=0. \label{eq:H6}
\end{align}
These are not independent assumptions about an abstract probability table;
they are enforced by the state and measurement choices of the generalized
Hardy construction.  At the same time quantum mechanics gives the nonzero
target event
\begin{equation}
 p_{\rm QM}(x_1{=}1,\,x_1'{=}1,\,x_2{=}1)
 =
 h_0^2(1-h_0^2)>0
 \label{eq:ppath_positive}
\end{equation}
for $0<h_0<1$.
Indeed, the corresponding amplitude is
\begin{align}
&\bigl(\langle X_1{=}1|\langle X_1'{=}1|\langle X_2{=}1|\bigr)|\psi\rangle
\nonumber\\
&\qquad =
h_1^3h_0+h_0^3h_1
=h_0h_1(h_0^2+h_1^2)
=h_0h_1 ,
\label{eq:ppath_target_amplitude}
\end{align}
whose modulus squared is Eq.~\eqref{eq:ppath_positive}.  By contrast, the
same basis definitions make the six amplitudes associated with
Eqs.~\eqref{eq:H1}--\eqref{eq:H6} vanish; the phase is split as
$e^{i\varphi/3}$ so that the three local overlaps cancel the global phase
$e^{i\varphi}$ in the second branch of Eq.~\eqref{eq:ppath_state}.

Now suppose that a GNCHV model exists.  Then every hidden
variable assigns definite values to all six observables
$X_1,Y_1,X_1',Y_1',X_2,Y_2$.  Consider any assignment with
\begin{equation}
 x_1=1,\qquad x_1'=1,\qquad x_2=1 .
\end{equation}
The zero-probability constraints force a contradiction.  If $y_1'=0$, then
Eq.~\eqref{eq:H4} gives $y_2=1$, Eq.~\eqref{eq:H6} gives $y_1=1$, and
Eq.~\eqref{eq:H2} forbids $y_1=1$ together with $x_1'=1$ and $y_2=1$.  If
$y_1'=1$, then Eq.~\eqref{eq:H1} gives $y_2=0$, Eq.~\eqref{eq:H3} gives
$y_1=0$, and Eq.~\eqref{eq:H5} forbids $y_1=0$ together with $x_1'=1$ and
$y_2=0$.  Both cases are impossible.  Hence any GNCHV model
predicts
\begin{equation}
 p(x_1{=}1,\,x_1'{=}1,\,x_2{=}1)=0,
 \label{eq:ppath_nchv_zero}
\end{equation}
in contradiction with Eq.~\eqref{eq:ppath_positive}.

Equivalently, the Hardy-type linear witness
\begin{align}
 \mathcal W_H
 &=
 p(x_1{=}1,x_1'{=}1,x_2{=}1)
 -p(x_1{=}1,y_1'{=}1,y_2{=}1) \nonumber\\
 &\quad
 -p(y_1{=}1,x_1'{=}1,y_2{=}1)
 -p(y_1{=}1,y_1'{=}1,x_2{=}1) \nonumber\\
 &\quad
 -p(x_1{=}1,y_1'{=}0,y_2{=}0)
 -p(y_1{=}0,x_1'{=}1,y_2{=}0) \nonumber\\
 &\quad
 -p(y_1{=}0,y_1'{=}0,x_2{=}1)
 \le0
 \label{eq:ppath_witness}
\end{align}
holds for every GNCHV model, while the quantum value is
$\mathcal W_H=h_0^2(1-h_0^2)>0$.  Thus the nonclassicality is not a local
Kochen-Specker violation of Bob's marginal and not generalized nonlocality of
any single tested context.  It is the failure of the individually
classical context descriptions to glue into one NCHV model for the whole
composite system.  If the complete measurement family is viewed as a
sheaf-theoretic cover, this is also the absence of a global section
\cite{AbramskyBrandenburger2011}; in the present paper the physical meaning is
the failure of a global assignment.

\subsection{\texorpdfstring{\(2\times3\)}{2x3} pure-state KCBS activation}
\label{sec:ex-23}
\label{sec:kcbs}

\paragraph{KCBS preliminaries.}
The KCBS scenario consists of five rank-one projectors
$P_j=|v_j\rangle\langle v_j|$ on a qutrit, with adjacent projectors
orthogonal, $P_jP_{j+1}=0$ modulo five.  We use
\begin{equation}
 |v_j\rangle =
 \cos\theta |0\rangle
 +\sin\theta\cos\frac{4\pi j}{5}|1\rangle
 +\sin\theta\sin\frac{4\pi j}{5}|2\rangle ,
 \label{eq:kcbs_vectors}
\end{equation}
with
\begin{equation}
 \cos^2\theta=\frac1{\sqrt5},\qquad
 \sin^2\theta=\frac{\sqrt5-1}{\sqrt5}.
\end{equation}
Following the convention used in Ref.~\cite{Porto2024}, let
\(B_j=I-2P_j\) and
\begin{equation}
 D=\sum_{j=0}^{4}B_jB_{j+1}.
\end{equation}
Since $P_jP_{j+1}=0$,
\begin{equation}
 B_jB_{j+1}=I-2P_j-2P_{j+1}.
\end{equation}
The KCBS noncontextual bound is
\begin{equation}
 \langle D\rangle\ge -3.
 \label{eq:kcbs_bound}
\end{equation}
For the vectors in Eq.~\eqref{eq:kcbs_vectors}, the sum operator is diagonal,
\begin{equation}
 D=\operatorname{diag}(5-4\sqrt5,\,2\sqrt5-5,\,2\sqrt5-5).
 \label{eq:D_diag}
\end{equation}
The state $|0\rangle$ therefore gives the maximal KCBS violation
$5-4\sqrt5\simeq -3.944$.

\paragraph{\texorpdfstring{\(2\times3\)}{2x3} scenario-level instance.}
We first give an algebraic scenario-level instance.  Alice is a qubit, Bob is
a qutrit, and
\begin{equation}
 |\psi\rangle
 =c_0|0\rangle_A|0\rangle_B+c_1|1\rangle_A|1\rangle_B,
 \qquad c_0^2+c_1^2=1 .
 \label{eq:qubit_qutrit_state}
\end{equation}
Set
\begin{equation}
 a=c_0^2,\qquad
 \rho_B=\operatorname{diag}(a,1-a,0).
\end{equation}
For the specified KCBS generalized Bell scenario, absence of generalized
nonlocality is certified by the same unnormalized GLHV method.  Bob's
adjacent KCBS pair is treated as one compatible block outcome, and the
hidden-variable distribution is shared over the tested contexts.  The finite
certificate uses nonnegative unnormalized response weights whose block
normalizations \(c_k\) are constrained to agree across the corresponding KCBS
contexts.  After dividing each block by its common \(c_k\) and absorbing
\(c_k\) into the hidden-variable weight, one obtains a normalized GLHV model.

At the reference point \(c_0=1/4\), the full enlarged-output probability
table has \(80\) entries.  The additional \(40\) marginal/coarse-grained
entries used in the verification code are derived from the same table, not
independent GLHV constraints.  The GLHV core feasibility problem and the
equal-\(c_k\) constraints are given in Appendix~\ref{app:23-glhv}.  Exact
evaluation gives a nonnegative solution at \(c_0=1/4\), certifying GLHV for
this measurement scenario at that point.  Additional pointwise checks below
\(c_0=1/4\) give feasible solutions with the same active-set pattern, but the
rigorous claim used here is the exact certificate at \(c_0=1/4\).  The core
matrices \(A_{64}\) and \(A_{80}\), the corresponding nonnegative
certificates, the derived-marginal checks, and verification scripts are
provided in the public code repository Ref.~\cite{GNCHVCode}.

Alice's two binary settings used in the GLHV certificate are
\begin{equation}
  A_1=\sigma_z,\qquad A_2=\sigma_x .
  \label{eq:alice-settings-23}
\end{equation}
Equivalently, \(A_1\) has eigenbasis \(\{|0\rangle,|1\rangle\}\), while
\(A_2\) has eigenbasis
\begin{equation}
  |+\rangle=\frac{|0\rangle+|1\rangle}{\sqrt2},
  \qquad
  |-\rangle=\frac{|0\rangle-|1\rangle}{\sqrt2}.
  \label{eq:alice-A2-basis}
\end{equation}
The global-contextuality witness below uses the \(A_1\) branch: conditional
on \(A_1=+1\), Bob is prepared in \(|0\rangle\), independently of \(a\) as
long as \(a>0\).

The unconditional KCBS value follows from Eq.~\eqref{eq:D_diag}:
\begin{equation}
 \langle D\rangle_{\rm uncond}
 =a(5-4\sqrt5)+(1-a)(2\sqrt5-5).
 \label{eq:D_uncond}
\end{equation}
For Bob's unconditional marginal we use the complete noncontextuality
characterization of the no-disturbing five-cycle scenario
\cite{Araujo2013}.  In the lower-bound convention used here, the correlator
facets are
\begin{equation}
  \sum_{j=0}^{4}\gamma_j x_j\ge -3,
  \qquad
  \gamma_j=\pm1,
  \qquad
  \prod_{j=0}^{4}\gamma_j=+1,
  \label{eq:five_cycle_family}
\end{equation}
where \(x_j=\langle B_jB_{j+1}\rangle\).  The standard KCBS inequality
\eqref{eq:kcbs_bound} is the symmetric member of this family, obtained by
\(\gamma_j=1\) for all \(j\).  Substituting the present \(x_j(a)\) into the
full family shows that the first boundary reached in the physical interval is
this symmetric KCBS facet, as shown explicitly in
Appendix~\ref{app:algebra}.  Hence Bob's unconditional KCBS empirical model is
noncontextual when
\begin{align}
 a &\le a_{\rm KCBS}
 =\frac12+\frac{\sqrt5}{10},
 \nonumber\\
 c_0 &\le c_{\rm KCBS}
 =\sqrt{\frac12+\frac{\sqrt5}{10}}
 \simeq0.850651 .
 \label{eq:standard_kcbs_window}
\end{align}
In the same range, the conditional state gives
\begin{equation}
 \langle D\rangle_{A_1=+1}
 =\langle0|D|0\rangle
 =5-4\sqrt5<-3.
 \label{eq:D_cond}
\end{equation}
Thus every $0<a\le a_{\rm KCBS}$ gives conditional contextual activation in
the standard KCBS sense.  This is a statement about the conditional
contextuality threshold.  The simultaneous no-generalized-nonlocality claim is
made only at the GLHV-certified point $c_0=1/4$, or at any additional
parameter values for which a GLHV certificate is independently supplied.

The implication for global contextuality is as follows.  If a
single GNCHV model existed, Bob's response functions for the KCBS
cycle would be independent of Alice's outcome.  Conditioning on $A_1=+1$ would
only update the hidden-variable weights; it would not create contextual Bob
responses.  The conditional Bob statistics would therefore still satisfy all
KCBS noncontextual inequalities, contradicting Eq.~\eqref{eq:D_cond}.

An auxiliary generalized-Bell witness is discussed in
Appendix~\ref{app:optim}.  It serves as a diagnostic comparison with
generalized nonlocality tests, but it is not part of the GLHV certificate or
the global-contextuality proof.

A compact way to certify the conditional effect is to combine Alice's
postselection with Bob's KCBS operator.  With the normalization convention used
here \cite{Porto2024},
\begin{equation}
 W=3\langle A_1\rangle+\langle(1+A_1)\otimes D\rangle .
 \label{eq:witness}
\end{equation}
For the state in Eq.~\eqref{eq:qubit_qutrit_state},
\begin{equation}
 W=-3-8(\sqrt5-2)a.
 \label{eq:W_general}
\end{equation}
Hence $W<-3$ for every $a>0$.  The witness is sensitive to the rare branch:
the conditional violation is fixed by the prepared state $|0\rangle$, while
the statistical weight of that branch is $P(A_1=+1)=a$.

For the experimentally more favorable choice $c_0=1/4$,
\begin{align}
 P(A_1=+1)&=\frac1{16},\\
 \langle D\rangle_{\rm uncond}
 &=\frac{13\sqrt5-35}{8}\simeq -0.74139,\\
 \langle D\rangle_{A_1=+1}
 &=5-4\sqrt5\simeq -3.94427,\\
 W&=-2-\frac{\sqrt5}{2}\simeq -3.11803.
\end{align}
At the reference point \(c_0=1/4\), the postselected branch has probability
\(1/16=6.25\%\) and yields a witness violation
\(-3-W=\sqrt5/2-1\simeq0.1180\), while the unconditional KCBS value remains
far from violation.

\subsection{Flagged qutrit Werner-local construction}
\label{sec:flagged-werner}

The previous construction is a scenario-level separation.  We now give a
state-level construction using a qutrit Werner-local entangled block
together with a product flag.  Let
\begin{equation}
 \rho_{AB}(\epsilon,w)
 =
 \epsilon |00\rangle\langle00|
 +(1-\epsilon)\rho_W^{(3)}(w),
 \label{eq:flagged_werner}
\end{equation}
where
\begin{equation}
 \rho_W^{(3)}(w)
 =
 w\frac{P_-}{3}
 +(1-w)\frac{I_9}{9}.
 \label{eq:qutrit_werner_block}
\end{equation}
Here $P_-=(I-V)/2$ is the projector onto the antisymmetric subspace of
$\mathbb C^3\otimes\mathbb C^3$, which has dimension $3$.  The state in
Eq.~\eqref{eq:flagged_werner} is not itself $U\otimes U$ invariant because
the product flag breaks the symmetry, but the nontrivial block is a
qutrit Werner state.  The parametrization in
Eq.~\eqref{eq:qutrit_werner_block} is the noisy-antisymmetric Werner
parametrization
\begin{equation}
  \rho_W^{(d)}(w)
  =
  w\,\frac{2P_-}{d(d-1)}
  +(1-w)\frac{I_{d^2}}{d^2}
\end{equation}
specialized to \(d=3\).  Werner's projective-measurement local model applies
to this family for \(w\le(d-1)/d\).  Thus, in dimension \(3\), the
projective-local region includes all \(w\le2/3\), while the state is
entangled for \(w>1/4\) \cite{Werner1989}.

We use the explicit point
\begin{equation}
 \epsilon=\frac12,\qquad w=\frac12.
 \label{eq:flagged_point}
\end{equation}
The Werner block is entangled and lies inside Werner's projective-local
region \cite{Werner1989}.  Since the flag branch is a product state and the
set of local correlations is convex, the full state
$\rho_{AB}(1/2,1/2)$ is local for all local projective measurements.  This is
enough for the generalized Bell scenarios considered here: a compatible local
context made of commuting projectors, such as a KCBS adjacent pair, is itself
a single projective measurement with joint outcomes.  For Bob's adjacent KCBS
pair \((P_j,P_{j+1})\), the joint projective measurement is
\begin{equation}
  \left\{P_j,\;P_{j+1},\;I-P_j-P_{j+1}\right\},
  \label{eq:kcbs_joint_pvm}
\end{equation}
because \(P_jP_{j+1}=0\).  The binary observables
\(B_j=I-2P_j\) and \(B_{j+1}=I-2P_{j+1}\) are obtained from this joint PVM by
coarse-graining.  Hence Werner's projective-local model supplies the GLHV
block response for every compatible local projective context in the tested
scenario.

The full state is still entangled.  Indeed, the partial transpose of
$\rho_{AB}(1/2,1/2)$ has the eigenvalue
\begin{equation}
 \lambda_{\min}
 =
 \frac{37-3\sqrt{177}}{144}<0,
 \label{eq:flagged_npt}
\end{equation}
so the state is NPT.

The local qutrit marginals are
\begin{equation}
 \rho_A=\rho_B
 =
 \frac23|0\rangle\langle0|
 +\frac16|1\rangle\langle1|
 +\frac16|2\rangle\langle2|.
 \label{eq:flagged_marginal}
\end{equation}
For Bob's KCBS operator in Eq.~\eqref{eq:D_diag},
\begin{equation}
 \langle D\rangle_{\rm uncond}
 =
 \frac53-2\sqrt5
 \simeq -2.805>-3.
 \label{eq:flagged_uncond}
\end{equation}
For this marginal all adjacent KCBS correlations are equal:
\begin{equation}
  \langle B_jB_{j+1}\rangle
  =
  \frac{\sqrt5-6}{3\sqrt5},
  \qquad j=0,\ldots,4.
  \label{eq:flagged_equal_correlations}
\end{equation}
The symmetric KCBS facet gives Eq.~\eqref{eq:flagged_uncond}, and the other
five-cycle facets are then automatically weaker because changing any
\(\gamma_j\) from \(+1\) to \(-1\) flips one negative correlation to a
positive contribution.  Thus the local qutrit KCBS statistics are
noncontextual; this is not a qubit-trivial statement, since the measurement
scenario is the full qutrit KCBS five-cycle.  Alice's flag measurement is a
single projective measurement and is locally noncontextual as well.

Now let Alice measure the flag projector
$|0\rangle\langle0|$ and postselect the outcome $0$.  Bob's conditional state
is obtained with probability $2/3$ and equals
\begin{equation}
 \rho_{B|0}
 =
 \frac{19}{24}|0\rangle\langle0|
 +\frac{5}{48}|1\rangle\langle1|
 +\frac{5}{48}|2\rangle\langle2|.
 \label{eq:flagged_cond}
\end{equation}
Consequently
\begin{equation}
 \langle D\rangle_{B|0}
 =
 \frac{35-33\sqrt5}{12}
 \simeq -3.233<-3.
 \label{eq:flagged_cond_D}
\end{equation}
The conditional Bob statistics violate the KCBS noncontextual bound.  If a
single GNCHV model existed for the whole experiment, Alice's
postselection would only reweight the same hidden variables \(\xi\); it could
not change Bob's context-independent KCBS response functions into a
contextual model.  Hence
Eq.~\eqref{eq:flagged_cond_D} rules out GNCHV even though the state is
projective-local and both local qutrit marginals are KCBS-noncontextual.

\section{Composition criterion and separable reduction}

Assume first that the empirical model has a GLHV representation on the whole
tested family \(\mathcal C\), with one distribution \(q(\lambda)\) shared by
all multipartite contexts:
\begin{equation}
 p_C(\mathbf m_C)
 =
 \sum_\lambda q(\lambda)
 \prod_{i=1}^n
 p_i(\mathbf m_{C_i}|C_i,\lambda),
 \qquad C\in\mathcal C .
 \label{eq:glhv_family}
\end{equation}
Here \(\mathbf m_{C_i}=\{m_{ij}\}_{j\in C_i}\).

Suppose further that, for every \(i\), \(C_i\), and \(\lambda\), the block
response admits a local noncontextual refinement,
\begin{equation}
p_i(\mathbf m_{C_i}|C_i,\lambda)
=
 \sum_{\sigma_i} q_{i,\lambda}(\sigma_i)
 \prod_{j\in C_i}p_i(m_{ij}|j,\sigma_i,\lambda).
 \label{eq:classical_block}
\end{equation}
Substituting Eq.~\eqref{eq:classical_block} into Eq.~\eqref{eq:glhv_family}
gives
\begin{align}
 p_C(\mathbf m_C)
 &=
 \sum_{\lambda,\sigma_1,\ldots,\sigma_n}
 q(\lambda)\prod_{i=1}^n q_{i,\lambda}(\sigma_i)
 \prod_{i=1}^n\prod_{j\in C_i}
 p_i(m_{ij}|j,\sigma_i,\lambda).
\end{align}
Writing \(\xi=(\lambda,\sigma_1,\ldots,\sigma_n)\) gives
\begin{equation}
 p_C(\mathbf m_C)
 =
 \sum_\xi q(\xi)
 \prod_{i=1}^n\prod_{j\in C_i}
 p_i(m_{ij}|j,\xi),
 \qquad C\in\mathcal C ,
 \label{eq:absorbed_hidden_variable}
\end{equation}
which is a GNCHV model.  Thus a GLHV model whose block responses admit
conditional local-NCHV refinements automatically glues to a GNCHV model.

This criterion also guides the construction of examples, rather than only excluding
them.  To obtain global contextuality without local contextuality or
generalized nonlocality, one should look for data with three features:
\begin{enumerate}
\item each party's marginal scenario has an NCHV-local model;
\item each tested multipartite context has a GLHV decomposition;
\item the GLHV block responses cannot be chosen to satisfy the
per-measurement factorization in Eq.~\eqref{eq:classical_block} with one
context-independent hidden variable for the whole system.
\end{enumerate}
This turns the separation into a search problem.  One can first certify
GLHV membership of the tested family, for instance by a finite convex-hull or linear
feasibility certificate, and then test the stronger GNCHV constraints by a
Hardy-type logical obstruction, a conditional contextuality witness, or a
polytope separation.  The three constructions above realize these
three mechanisms.

Quantum data need not have the block factorization
\eqref{eq:classical_block}.  For a block state $\rho_i$ and compatible
projectors in context $C_i$,
\begin{equation}
 p_{C_i}(\{m_{ij}\}_{j\in C_i})
 =
 \operatorname{Tr}\!\left(\rho_i
 P_{\{m_{ij}\}_{j\in C_i}}\right).
\end{equation}
Only in special cases, for example when the relevant degrees of freedom are
separable as
\begin{equation}
 \rho_i=\sum_k p_k\bigotimes_{j\in C_i}\rho_{ijk},
 \label{eq:block_separable}
\end{equation}
does the quantum expression reduce to a classical mixture of product
responses for the factors in $C_i$.  Equation~\eqref{eq:block_separable} is
not the definition of a reduced density matrix.  Rather, $\rho_i$ may be a
reduced or conditional state, and the equation asserts an additional
separability property relative to the degrees of freedom used in the context.
When $j$ labels different measurement choices on the same Hilbert space,
rather than distinct tensor factors, this tensor-product expression is not
available.

\begin{proposition}[Separable reduction]
\label{prop:separable-reduction}
Let the multipartite state be separable across the parties,
\begin{equation}
 \rho=\sum_\mu r_\mu\,\rho_1^\mu\otimes\cdots\otimes\rho_n^\mu .
\end{equation}
If, for every $\mu$ and every party $i$, the local state $\rho_i^\mu$ admits
an NCHV-local model for party $i$'s measurement scenario, then the composite
statistics admit a GNCHV model.
\end{proposition}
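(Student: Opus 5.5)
We prove the proposition by producing an explicit GNCHV model. The idea is to take the separable decomposition index $\mu$ as the shared hidden variable and then apply the composition criterion of the previous section, with the local NCHV models of the states $\rho_i^\mu$ serving as the conditional refinements in Eq.~\eqref{eq:classical_block}.

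First, we write the statistics of a multipartite context $C=(C_1,\ldots,C_n)\in\mathcal C$. Party $i$ measures commuting projectors, and the joint projector for outcomes $\mathbf m_{C_i}$ is $P^{(i)}_{\mathbf m_{C_i}}$. There are no cross-party measurements in the scenario. Linearity of the trace and the product structure of each term give
\begin{equation*}
 p_C(\mathbf m_C)=\sum_\mu r_\mu\prod_{i=1}^n \operatorname{Tr}\!\left(\rho_i^\mu P^{(i)}_{\mathbf m_{C_i}}\right).
\end{equation*}
This is already a GLHV decomposition of the form \eqref{eq:glhv_family}. The hidden variable is $\lambda=\mu$ with $q(\lambda)=r_\mu$, which is independent of $C$. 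The block responses are $p_i(\mathbf m_{C_i}|C_i,\mu)=\operatorname{Tr}(\rho_i^\mu P^{(i)}_{\mathbf m_{C_i}})$, and each depends only on party $i$'s context.

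Second, we use the hypothesis. Because $\rho_i^\mu$ has an NCHV-local model for party $i$'s scenario, Definition~\ref{def:nchv-local} supplies a distribution $q_{i,\mu}(\sigma_i)$ and measurement-indexed responses $p_i(m_{ij}|j,\sigma_i,\mu)$. These satisfy $\operatorname{Tr}(\rho_i^\mu P^{(i)}_{\mathbf m_{C_i}})=\sum_{\sigma_i}q_{i,\mu}(\sigma_i)\prod_{j\in C_i}p_i(m_{ij}|j,\sigma_i,\mu)$ for every $C_i\in\mathcal C_i$. This is exactly the refinement \eqref{eq:classical_block}, and crucially $q_{i,\mu}$ does not depend on $C_i$.

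Third, we substitute the refinements into the GLHV decomposition and set $\xi=(\mu,\sigma_1,\ldots,\sigma_n)$ with $q(\xi)=r_\mu\prod_i q_{i,\mu}(\sigma_i)$. This reproduces Eq.~\eqref{eq:absorbed_hidden_variable}. Each response $p_i(m_{ij}|j,\xi)$ depends only on the measurement $j$ and on $(\sigma_i,\mu)$, so it is context-independent. This is therefore a GNCHV model in the sense of Definition~\ref{def:gnchv}.

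The main point to check carefully is context-independence at two levels. The weight $q(\xi)$ must not depend on which contexts the parties choose. This holds because $r_\mu$ is fixed by the state and each $q_{i,\mu}$ is common to all local contexts of party $i$. We must also confirm that the factorized product form of each term remains valid for every tested context. This holds because all tested contexts are products of local contexts, and because the per-party probabilities for a product state factor, including when a party jointly measures several compatible projectors.

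If the sum over $\mu$ is a continuous integral or the $\sigma_i$ spaces are infinite, the same construction applies with sums replaced by integrals over product measures; measurability of the responses in $\mu$ is the only technical point to note.
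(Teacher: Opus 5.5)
Your proof is correct and follows the paper's argument: you write each product term as a product of local traces, insert each party's NCHV-local model for $\rho_i^\mu$, and absorb $(\mu,\sigma_1,\ldots,\sigma_n)$ into a single hidden variable $\xi$ with weight $r_\mu\prod_i q_{i,\mu}(\sigma_i)$. Framing the first step as a GLHV decomposition and then invoking the composition criterion repackages the same substitution. Allowing the responses to depend on $\mu$ is slightly more careful than the paper's written version, which uses $\mu$-independent responses $p_i(m_{ij}|j,\lambda_i)$.
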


\begin{proof}
For a product component, quantum theory gives
\begin{align}
 p_C(\{\{m_{ij}\}\})
 &=
 \prod_{i=1}^n
 \operatorname{Tr}\!\left(\rho_i^\mu
 P_{\{m_{ij}\}_{j\in C_i}}\right) \nonumber\\
 &=
 \prod_{i=1}^n
 \sum_{\lambda_i}q_i^\mu(\lambda_i)
 \prod_{j\in C_i}p_i(m_{ij}|j,\lambda_i).
\end{align}
Expanding the product over parties and then mixing over $\mu$ gives
\begin{equation}
 p_C(\{\{m_{ij}\}\})
 =
 \sum_{\mu,\lambda_1,\ldots,\lambda_n}
 r_\mu\prod_{i=1}^n q_i^\mu(\lambda_i)
 \prod_{i=1}^n\prod_{j\in C_i}p_i(m_{ij}|j,\lambda_i).
\end{equation}
This is Eq.~\eqref{eq:global_nchv} with
$\xi=(\mu,\lambda_1,\ldots,\lambda_n)$.  Thus product states compose
classically, and separable mixtures compose by convexity.
\end{proof}

Proposition~\ref{prop:separable-reduction} is the precise sense in which
ordinary separability removes the obstruction studied here.  The paper's
examples are not failures of classical composition for separable local
resources; they are failures of the additional factorization that would be
needed to promote GLHV blocks and local NCHV descriptions to one global NCHV
model.
This is consistent with recent results on multi-qubit projective
contextuality, where entanglement and Bell-type nonlocality are shown to be
necessary ingredients for logical proofs in broad multi-qubit settings
\cite{WrightKunjwal2023}.  The present work asks a different compositional
question: even when local contextuality and generalized nonlocality are absent
in the specified scenario, can the GLHV block descriptions be
glued into one GNCHV model for the whole system?

A small linear-algebra fact will be used repeatedly.  The \(4\times4\)
incidence block \(R\) for two binary settings is given in
Appendix~\ref{app:glhv-certificates}; it has rank three.  Therefore, for
fixed values of the remaining compatible measurements, a context-wise
decomposition exists whenever the right-hand side obeys the no-disturbance
marginal condition
\begin{equation}
 \begin{aligned}
 &p(m_1,m_1',x_2=1)+p(m_1,m_1',x_2=-1)\\
 &\quad =
 p(m_1,m_1',y_2=1)+p(m_1,m_1',y_2=-1).
 \end{aligned}
 \label{eq:no_disturbance_rank}
\end{equation}
This proves context-wise classical extendability.  It does not by itself
prove global noncontextuality, because different contexts may require
incompatible assignments to the shared measurements or to the same subsystem
variables.

\section{Unrestricted Bell tests}

The preceding results are contextuality statements for Bob's KCBS
compatibility scenario.  They should not be conflated with Bell locality of
the state in Eq.~\eqref{eq:qubit_qutrit_state} under arbitrary measurements.
For $0<c_0,c_1<1$, the state has two nonzero Schmidt coefficients and is
locally equivalent, on its support, to a two-qubit pure entangled state.  Its
maximal CHSH value is \cite{Gisin1991,Horodecki1995}
\begin{equation}
 S_{\max}=2\sqrt{1+4c_0^2c_1^2}>2.
\end{equation}
Therefore any claim of no generalized nonlocality for this pure-state
construction must be read as restricted to a chosen measurement family.  The
conditional KCBS separation does not require an unrestricted Bell-locality
assertion; it requires Bob's unconditional KCBS statistics to be
noncontextual, and the no-generalized-nonlocality statement additionally
requires a GLHV certificate for the same measurement scenario.

\section{Discussion}

Table~\ref{tab:mechanisms} summarizes the three separations.

\begin{table*}[t]
\caption{\label{tab:mechanisms}
Two scenario-level separations and one state-level projective-local
separation.  The final column records the qualification needed to state the
result without conflating contextuality with unrestricted Bell locality.}
\begin{tabular}{p{0.22\textwidth}p{0.23\textwidth}p{0.24\textwidth}p{0.23\textwidth}}
\hline\hline
Construction & Role & Hidden resource & Precise qualification \\
\hline
Polarization-path state
& Positive scenario-level example
& GNCHV obstruction
& Sheaf language applies only after the system cover is fixed \\
Qubit-qutrit pure state
& Positive at certified GLHV points
& Branch-selected KCBS violation
& Interval claims require a parameterized GLHV certificate \\
Flagged qutrit Werner-local state
& Positive state-level projective-local example
& Product-flagged KCBS violation
& Locality covers projective compatible contexts \\
\hline\hline
\end{tabular}
\end{table*}

The constructions show that the word ``hidden'' covers several inequivalent
structures.  In the polarization-path construction, local noncontextuality and
GLHV membership fail to produce one GNCHV
model.  In the qubit-qutrit KCBS construction, a dominant nonviolating branch
masks a rare branch that is maximally violating after postselection.  In the
flagged qutrit Werner-local construction, the Werner block supplies
entanglement and state-level projective locality, while the qutrit flag
supplies a conditional KCBS witness against a single GNCHV model.

The broader lesson is that classicality is not a single yes-or-no property of
a bipartite state.  It is a property of a state together with a measurement
scenario, a subsystem decomposition, and a conditioning structure.  The
central separation in this paper is between no generalized nonlocality in the
specified contexts and the nonexistence of a single GNCHV model.
Conditional contextuality is one way to diagnose that separation, but it is
not the definition of the global obstruction.  The flagged qutrit
Werner-local construction shows one clean way to obtain a state-level version:
use a state with a proven projective-local model and choose the generalized
Bell contexts to be compatible projective measurements.

The composition criterion also suggests how to search for further examples.
The structural examples should be sought in scenarios whose tested context
families admit GLHV descriptions but whose compatibility hypergraph forbids a
common noncontextual assignment, as in the polarization-path construction.  The
conditional examples should be sought by combining a noncontextual
unconditional marginal with a local filter that selects a contextual branch,
as in the $2\times3$ and flagged qutrit constructions.  In both cases the
separable-reduction proposition gives a negative guide: If, for every value
of $\lambda$ in the GLHV model, the conditional block response
$p_{C_i}(\cdot|\lambda)$ can itself be further described by a local NCHV model
(i.e., satisfy the factorized form of Eq.~\eqref{eq:nchv_local}).

The hierarchy above is not an LOCC resource theory.  LOCC concerns which
quantum states or instruments can be transformed into which others by local
operations and classical communication.  By contrast, NCHV-local, GLHV, and
GNCHV are statements about whether the probability data of a fixed
measurement scenario admit particular hidden-variable factorizations. There is nevertheless an operational connection.  The postselection
steps used in the constructions above are one-way local measurements followed
by classical conditioning, hence they are allowed laboratory operations.
More specifically, the successful branch is a trace-nonincreasing
one-way LOCC operation; if the outcome is retained rather than discarded, the
full instrument is an ordinary LOCC measurement with a classical outcome
register.  Compatible measurements inside one laboratory, such as Bob's KCBS
adjacent pairs or the polarization-path pairs, are also local operations with
respect to the Alice--Bob partition.  If a GNCHV model for the whole scenario
existed, such conditioning could only update the distribution over the same
hidden variable
$\xi$; it could not change context-independent response functions into a
contextual model.  Thus a conditional KCBS violation after local
postselection is a witness against GNCHV.  This does not mean that global
contextuality is an LOCC monotone, nor that LOCC alone creates the
nonclassicality.  LOCC is the operational way of selecting a branch; the
failure is the absence of a single global hidden-variable assignment before
the branch is selected.

\begin{acknowledgments}
The author independently conceived the subject of this study in 2023. Inspired by Refs.~\cite{Temistocles2019,Xue2023}, we presented the theoretical framework, and the results for example I (the $2\times 4$ polarization-path system) in spring 2024, and subsequently completed the derivation of example II (the $2\times 3$ KCBS system) inspired by Ref.\cite{Porto2024}. The results for example III (the $3\times 3$ Werner state) were derived with the assistance of AI agent tools in 2026.
\end{acknowledgments}

\section*{Code availability}

The verification code, matrix-generation routines, and machine-readable
nonnegative GLHV certificates for the \(2\times4\) and \(2\times3\)
constructions are available in Ref.\cite{GNCHVCode}. The repository
contains scripts that reproduce the finite feasibility checks
\(A_{64}X_{64}=\operatorname{vec}P\) and
\(A_{80}X_{80}=\operatorname{vec}P\), the equal-\(c_k\)
normalization-consistency constraints used in Appendix~A, and separate
checks that the reported marginal/coarse-grained data are obtained from the
same probability tables.

\appendix
\section{GLHV Certificates for Compatible Measurement Contexts}
\label{app:glhv-certificates}

This appendix records the finite unnormalized GLHV certificates used in the
two scenario-level constructions.  If Bob's compatible local context is
treated as one multi-outcome block measurement, then a local decomposition of
the enlarged-output table has exactly the GLHV form
\begin{equation}
  p(a,\mathbf b|x,C_B)
  =
  \sum_\lambda q(\lambda)\,
  p_A(a|x,\lambda)\,
  p_B(\mathbf b|C_B,\lambda),
  \label{eq:ordinary_lhv_as_glhv}
\end{equation}
where \(\mathbf b\) denotes the whole compatible block of Bob's outcomes.
The finite certificates below use the standard deterministic refinement on
Alice's two binary settings.  There are four deterministic Alice assignments;
write them as \(D_k(a|x)\), \(k=1,\ldots,4\).  For each Bob context \(C_B\)
and Bob block outcome \(\mathbf b\), introduce nonnegative variables
\(X^{C_B}_{\mathbf b,k}\) and impose
\begin{equation}
  p(a,\mathbf b|x,C_B)
  =
  \sum_{k=1}^{4}D_k(a|x)\,
  X^{C_B}_{\mathbf b,k}.
  \label{eq:core-unnormalized}
\end{equation}
The variables \(X^{C_B}_{\mathbf b,k}\) are unnormalized Bob response
weights.  Their block sums
\begin{equation}
  c_{k,C_B}
  =
  \sum_{\mathbf b}X^{C_B}_{\mathbf b,k}
  \label{eq:ck-definition}
\end{equation}
are required to be independent of Bob's context, \(c_{k,C_B}=c_k\).  We do
not impose \(c_k=1\), because \(c_k\) is the hidden-variable weight of
Alice's deterministic sector, not the normalization of a response function.
For \(c_k>0\), define
\begin{equation}
  p_B(\mathbf b|C_B,k)
  =
  \frac{X^{C_B}_{\mathbf b,k}}{c_k},
  \qquad
  q(k)=c_k .
  \label{eq:normalize-ck}
\end{equation}
The normalization of the observed table implies \(\sum_k c_k=1\), and
Eq.~\eqref{eq:core-unnormalized} becomes the normalized GLHV decomposition
in Eq.~\eqref{eq:ordinary_lhv_as_glhv}.  If \(c_k=0\), the sector has zero
weight and its normalized Bob response can be chosen arbitrarily.

The \(4\times4\) incidence block
\begin{equation}
  R=
  \begin{pmatrix}
  1&1&0&0\\
  0&0&1&1\\
  1&0&1&0\\
  0&1&0&1
  \end{pmatrix}
  \label{eq:rank_three_block_app}
\end{equation}
encodes the four deterministic assignments for Alice's two binary settings.
It is the same truth-table construction used in finite Bell-polytope
certificates \cite{Fine1982,PitowskySvozil2001}.  Fixed-partition
block-local decompositions of the same algebraic form also appear, for
example, in Ref.~\cite{MitchellPopescuRoberts2004}; here the block is Bob's
compatible local context and the certificate is used only for the
Alice-Bob GLHV condition.

For comparison, a direct deterministic enumeration of Bob's block responses
would use \(4\times4^4=1024\) nonnegative variables in the \(2\times4\)
case, and \(4\times4^5=4096\) variables in the \(2\times3\) KCBS case.  The
unnormalized block form in Eq.~\eqref{eq:core-unnormalized} uses \(64\) and
\(80\) nonnegative variables, respectively, for the core certificates.

The reported marginal or coarse-grained entries \(P_1\) are derived from the
full enlarged-output table \(P\).  If \(P_1=T(P)\) for a linear
marginalization or coarse-graining map \(T\), then Eq.~\eqref{eq:core-unnormalized}
implies
\begin{align}
  (TP)_{a,\beta|x,C_B}
  &=
  \sum_{\mathbf b}T_{\beta\mathbf b}
  p(a,\mathbf b|x,C_B) \nonumber\\
  &=
  \sum_{k=1}^{4}D_k(a|x)
  \left(
    \sum_{\mathbf b}T_{\beta\mathbf b}
    X^{C_B}_{\mathbf b,k}
  \right).
  \label{eq:derived-marginal}
\end{align}
Thus the same certificate induces the reported marginal or coarse-grained
data.  The code checks these entries separately as consistency checks, not as
new GLHV assumptions.  The explicit matrices, row and column indexing
conventions, nonnegative certificate vectors, and verification scripts are
provided in the public code repository Ref.~\cite{GNCHVCode}; this appendix
records the mathematical structure of those certificates.

\subsection{\texorpdfstring{\(2\times4\)}{2x4} polarization-path construction}
\label{app:ppath-glhv}

For the polarization-path construction, let \(P(h_0)\) denote the full
three-variable probability table generated by the state and measurements in
Sec.~\ref{sec:ppath}.  There are four Bob compatible contexts and four block
outcomes per context.  The core certificate therefore has
\(4\times4\times4=64\) nonnegative variables and is written as
\begin{equation}
  A_{64}X_{64}=\operatorname{vec}P(h_0),
  \qquad
  X_{64}\ge0 .
  \label{eq:A64_glhv}
\end{equation}
The vector \(X_{64}\) is grouped as the variables
\(X^{C_B}_{\mathbf b,k}\) in Eq.~\eqref{eq:core-unnormalized}, and the
equal-\(c_k\) constraints impose Eq.~\eqref{eq:ck-definition} across the four
Bob contexts.  The \(32\) marginal/coarse-grained entries \(P_1(h_0)\) are
then checked by applying the corresponding linear maps \(T\) as in
Eq.~\eqref{eq:derived-marginal}.  Quantifier elimination over
\(X_{64}\), together with the equal-\(c_k\) constraints, reduces the
certificate to
\begin{equation}
  -h_0^2+h_0^4\le0 .
  \label{eq:A64_eliminated}
\end{equation}
Since this is equivalent to
\[
  h_0^2(1-h_0^2)\ge0,
\]
the GLHV model exists throughout the physical range
\[
  0\le h_0\le1 .
\]
Thus the polarization-path Hardy contradiction is not a generalized
nonlocality contradiction for the unpostselected data.  It is the failure of
these GLHV block descriptions to glue into a single global noncontextual
assignment.

\subsection{\texorpdfstring{\(2\times3\)}{2x3} KCBS construction}
\label{app:23-glhv}

For the \(2\times3\) qubit-qutrit construction, the tested contexts are
Bob's five adjacent KCBS pairs, together with Alice's two binary settings
\(A_1,A_2\) from Eq.~\eqref{eq:alice-settings-23}.  At the reference point
\begin{equation}
  c_0=\frac14,
\end{equation}
the full enlarged-output probability table has
\(5\times4\times4=80\) entries in the same block notation.  The core
certificate is
\begin{equation}
  A_{80}X_{80}=\operatorname{vec}P,
  \qquad
  X_{80}\ge0 .
  \label{eq:A80_glhv}
\end{equation}
Here \(X_{80}\) is grouped as \(X^{C_B}_{\mathbf b,k}\), with one four-entry
block for each Alice deterministic sector \(k\) and each KCBS adjacent-pair
context \(C_B\).  The equal-\(c_k\) constraints identify the corresponding
unnormalized normalizations across the five KCBS contexts:
\begin{equation}
  \sum_{\ell\in I_{k,C}}X_\ell=c_k .
  \label{eq:ck-sets-23}
\end{equation}
No constraint \(c_k=1\) is imposed during the linear solve.  The
additional \(40\) marginal/coarse-grained entries used in the verification
code are derived from the same table by Eq.~\eqref{eq:derived-marginal}.

Exact evaluation at \(c_0=1/4\) gives a nonnegative solution \(X_{80}\) of
Eq.~\eqref{eq:A80_glhv} satisfying the equal-\(c_k\) constraints.
Substituting this solution into the linear system reproduces the quantum
probability table.  Normalizing each block by the common \(c_k\) gives a
normalized GLHV model.  Hence the specified \(2\times3\) measurement scenario
has a GLHV model at that point.

The certificate at \(c_0=1/4\) should not by itself be read as an interval
theorem.  Pointwise computations at several smaller values of \(c_0\) give
feasible solutions with the same active-set pattern, indicating that the
certificate is not isolated.  Turning this observation into an interval
statement would require a parameter-dependent feasible solution or an
independent convexity argument.  A systematic route to such an interval
certificate is the following.  Set
\begin{equation}
  a=c_0^2,\qquad r=c_0\sqrt{1-c_0^2}.
\end{equation}
All entries of the core probability table are algebraic functions of \(a\)
and \(r\).  One
may sample several rational values of \(c_0\) in \(0<c_0\le1/4\), solve
Eq.~\eqref{eq:A80_glhv} at each point, and identify a stable active set of
zero coefficients.  If that active set is fixed, the remaining linear system
yields candidate response functions
\begin{equation}
  X_\ell(a,r)=\alpha_\ell+\beta_\ell a+\gamma_\ell r .
\end{equation}
The numerical exploration then becomes a proof once the one-variable
inequalities
\begin{equation}
  X_\ell\!\left(c_0^2,c_0\sqrt{1-c_0^2}\right)\ge0,
  \qquad 0<c_0\le\frac14,
\end{equation}
are verified for every coefficient.  Sampling is only used to guess the
active set; a final interval statement would require analytic nonnegativity
of the resulting parameter-dependent response functions.  The present paper
uses the exact certified point \(c_0=1/4\), while the sampled smaller values
are reported only as diagnostic evidence for a common parameter-dependent
certificate.

\section{\texorpdfstring{Algebra of the \(2\times3\) KCBS construction}{Algebra of the 2x3 KCBS construction}}
\label{app:algebra}

This appendix records the elementary algebra used in the
\(2\times3\) KCBS construction.  Bob's reduced state is
\begin{equation}
  \rho_B=\operatorname{diag}(a,1-a,0),
  \qquad a=c_0^2 .
\end{equation}
For the KCBS vectors in Eq.~\eqref{eq:kcbs_vectors},
\begin{equation}
  \langle P_j\rangle
  =
  a\cos^2\theta
  +(1-a)\sin^2\theta
  \cos^2\frac{4\pi j}{5}.
  \label{eq:app_pj}
\end{equation}
The five trigonometric values are
\begin{equation}
 \cos^2\frac{4\pi j}{5}
 =
 \left(
 1,\,
 \frac{3+\sqrt5}{8},\,
 \frac{3-\sqrt5}{8},\,
 \frac{3-\sqrt5}{8},\,
 \frac{3+\sqrt5}{8}
 \right).
 \label{eq:app_trig_values}
\end{equation}

Set
\begin{equation}
  x_j=\langle B_jB_{j+1}\rangle .
\end{equation}
Since \(B_j=I-2P_j\) and \(P_jP_{j+1}=0\),
\begin{equation}
  x_j=1-2\langle P_j\rangle-2\langle P_{j+1}\rangle .
  \label{eq:app_xj_def}
\end{equation}
Substitution gives
\begin{align}
 x_0=x_4
 &=
 a\left(\frac52-\frac{11\sqrt5}{10}\right)
 -\frac32+\frac{3\sqrt5}{10},
 \label{eq:app_x04}\\
 x_1=x_3
 &=
 a\left(\frac32-\frac{11\sqrt5}{10}\right)
 -\frac12+\frac{3\sqrt5}{10},
 \label{eq:app_x13}\\
 x_2
 &=
 a\left(2-\frac{8\sqrt5}{5}\right)
 -1+\frac{4\sqrt5}{5}.
 \label{eq:app_x2}
\end{align}
The sum of the five adjacent correlations is therefore
\begin{align}
 \langle D\rangle_{\rm uncond}
 &=
 \sum_{j=0}^{4}x_j
 \nonumber\\
 &=
 a(5-4\sqrt5)
 +(1-a)(2\sqrt5-5),
 \label{eq:app_D_uncond}
\end{align}
in agreement with Eq.~\eqref{eq:D_uncond}.

The quantum KCBS statistics are no-disturbing, so the complete \(n=5\) cycle
noncontextuality characterization applies \cite{Araujo2013}.  In the
lower-bound convention used here, the tight correlator facets are
\begin{equation}
  \sum_{j=0}^{4}\gamma_j x_j\ge -3,
  \qquad
  \prod_{j=0}^{4}\gamma_j=+1,
  \qquad
  \gamma_j=\pm1 .
  \label{eq:app_cycle_facets}
\end{equation}
The ordinary KCBS facet corresponds to
\(\gamma_j=1\) for all \(j\).  Using
Eq.~\eqref{eq:app_D_uncond}, this facet gives
\begin{equation}
  a(5-4\sqrt5)+(1-a)(2\sqrt5-5)\ge -3 .
\end{equation}
Solving for \(a\) yields
\begin{equation}
  a\le a_{\rm KCBS}
  =
  \frac12+\frac{\sqrt5}{10}.
  \label{eq:app_a_kcbs}
\end{equation}
Equivalently,
\begin{equation}
  c_0\le c_{\rm KCBS}
  =
  \sqrt{\frac12+\frac{\sqrt5}{10}}
  \simeq0.850651 .
  \label{eq:app_c_kcbs}
\end{equation}

The remaining five-cycle facets in
Eq.~\eqref{eq:app_cycle_facets} do not produce a stronger restriction
inside the physical interval \(0\le a\le1\).  Thus the relevant
unconditional local-noncontextuality threshold for Bob's KCBS marginal is
the standard KCBS boundary in Eq.~\eqref{eq:app_a_kcbs}.  In particular,
the global-contextuality argument in the main text uses the fact that
Bob's unconditional KCBS statistics are noncontextual while the
postselected branch violates the KCBS bound.

\section{Auxiliary Generalized-Bell Witness}
\label{app:optim}

This appendix records the origin of an auxiliary optimized expression that
arises from a generalized-Bell witness.  It is not part of the proof of the
GLHV certificate or of global contextuality in Sec.~\ref{sec:ex-23}; it is
included only to clarify the status of an expression that appears naturally
when one optimizes over Alice's local measurements.

Consider the generalized-Bell scenario in which Alice has two binary
observables \(A_1,A_2\), while Bob can jointly measure compatible pairs of
KCBS observables.  In particular, the product \(B_4B_5\) is an allowed
joint-measurement outcome because \(B_4\) and \(B_5\) are compatible.
The relevant CHSH-type generalized-Bell expression is
\begin{equation}
  \mathcal{I}
  =
  \langle A_1B_2\rangle
  + \langle A_1B_4B_5\rangle
  + \langle A_2B_2\rangle
  - \langle A_2B_4B_5\rangle .
  \label{eq:aux-I}
\end{equation}
For any generalized local hidden-variable model,
\begin{equation}
  \mathcal{I}\geq -2 .
  \label{eq:aux-I-bound}
\end{equation}
This is the same algebraic bound as the lower CHSH bound, with Bob's
second effective observable given by the compatible product \(B_4B_5\).

The bound can also be obtained directly from the probability expansion for
one binary outcome \(a=\pm1\) on Alice's side and two compatible binary
outcomes \(b_1,b_2=\pm1\) on Bob's side:
\begin{align}
  p(a,b_1,b_2|x,y)
  = \frac{1}{8}\Big[
    1
    &+ a\langle A_x\rangle
    + b_1\langle B_{y_1}\rangle
    + b_2\langle B_{y_2}\rangle \notag\\
    &+ b_1b_2\langle B_{y_1}B_{y_2}\rangle
    + ab_1\langle A_xB_{y_1}\rangle \notag\\
    &+ ab_2\langle A_xB_{y_2}\rangle
    + ab_1b_2\langle A_xB_{y_1}B_{y_2}\rangle
  \Big].
  \label{eq:fourier-prob}
\end{align}
Non-negativity of suitable probabilities gives
Eq.~\eqref{eq:aux-I-bound}.

For the state
\begin{equation}
  |\psi\rangle
  =
  c_0|0\rangle_A|0\rangle_B
  +
  c_1|1\rangle_A|1\rangle_B,
  \qquad
  c_1=\sqrt{1-c_0^2},
  \label{eq:aux-state}
\end{equation}
with Bob's KCBS observables fixed as in Sec.~\ref{sec:kcbs}, one may
optimize Eq.~\eqref{eq:aux-I} over Alice's two binary projective
measurements.  This gives the optimized value
\begin{align}
  f(c_0)
  =
  \frac{1}{10}\Big[
    5-5\sqrt{5}
    -(5-\sqrt{5})c_0^2
    &-4\cdot 5^{3/4}c_0
      \Big(
        \sqrt{2-2c_0^2} \notag\\
    &\quad
        +\sqrt{(3-\sqrt{5})(1-c_0^2)}
      \Big)
  \Big].
  \label{eq:aux-f}
\end{align}
The nonlinear terms are proportional to
\(c_0c_1\), since
\begin{equation}
  \sqrt{2-2c_0^2}=\sqrt{2}\,c_1,
  \qquad
  \sqrt{(3-\sqrt{5})(1-c_0^2)}
  =
  \sqrt{3-\sqrt{5}}\,c_1 .
\end{equation}
Thus \(f(c_0)\) probes the coherent entanglement term of the pure state,
whereas the unconditional KCBS analysis in Appendix~\ref{app:algebra}
depends only on Bob's marginal
\(\rho_B=\mathrm{diag}(c_0^2,1-c_0^2,0)\).

This expression should not be confused with the KCBS noncontextuality
threshold.  The KCBS threshold is obtained from the five-cycle
correlation-polytope inequalities for Bob's marginal statistics.  By
contrast, \(f(c_0)\) is an optimized two-party generalized-Bell witness
value involving Alice's measurement choices and Bob's compatible product
\(B_4B_5\).  It is therefore not a KCBS facet boundary and is not used to
establish local noncontextuality, the GLHV certificate, or global
contextuality in the main text.

For orientation, the endpoint values are
\begin{align}
  f(0) &= \frac{1-\sqrt{5}}{2}\approx -1.118,\\
  f(1) &= -\frac{2\sqrt{5}}{5}\approx -0.894.
\end{align}
The optimized witness may serve as a diagnostic comparison with
generalized nonlocality tests, but it is logically independent of the
main \(2\times3\) construction.

\end{document}